\newtheorem{theorem}{Theorem}
\newtheorem{proposition}[theorem]{Proposition}
\newtheorem{remark}[theorem]{Remark}
\newtheorem{definition}[theorem]{Definition}
\newenvironment{proof}{\paragraph{Proof.}}{\hfill$\square$}
\begin{document}

\runninghead{F. Nda\"{\i}rou, I. Area, J. J. Nieto, C. J. Silva, D. F. M. Torres}

\title{Mathematical modeling of Zika disease\\ 
in pregnant women and newborns\\ 
with microcephaly in Brazil}

\author{Fa\"{\i}\c{c}al Nda\"{\i}rou\affil{a,b},
Iv\'an Area\affil{b},
Juan J. Nieto\affil{c},
Cristiana J. Silva\affil{d},\\
Delfim F. M. Torres\affil{d}\corrauth}


\address{\affilnum{a}African Institute for Mathematical Sciences, AIMS-Cameroon, 
P.O. Box 608, Limb\'e Crystal Gardens, South West Region, Cameroon\\
\affilnum{b}Departamento de Matem\'atica Aplicada II, E.E. Aeron\'autica e do Espazo, 
Universidade de Vigo, Campus As Lagoas s/n, 32004 Ourense, Spain\\
\affilnum{c}Facultade de Matem\'{a}ticas, Universidade de Santiago de Compostela, 
15782 Santiago de Compostela, Spain\\
\affilnum{d}Center for Research and Development in Mathematics and Applications (CIDMA), 
Department of Mathematics, University of Aveiro, 3810--193 Aveiro, Portugal}

\corraddr{Delfim F. M. Torres, 
Department of Mathematics,
University of Aveiro,
3810-193 Aveiro, Portugal.
Email: delfim@ua.pt}


\begin{abstract}
We propose a new mathematical model for the spread of Zika virus. 
Special attention is paid to the transmission of microcephaly. 
Numerical simulations show the accuracy of the model 
with respect to the Zika outbreak occurred in Brazil.
\end{abstract}

\MOS{34D20; 92D30} 

\keywords{stability; epidemiology; mathematical modeling; 
zika virus and microcephaly; Brazil; positivity and boundedness of solutions.}

\maketitle

\vspace{-6pt}


\section{Introduction}

Zika virus infection on humans is mainly caused by the bite of an infected 
\emph{Aedes} mosquito, either \emph{A. aegypti} or \emph{A. albopictus}. 
The infection on human usually causes rash, mild fever, conjunctivitis, 
and muscle pain. These symptoms are quite similar to dengue and chikungunya diseases, 
which can be transmitted by the same mosquitoes. Other modes of transmission 
of Zika disease have been observed, as sexual transmission, though less common
\cite{Moreno:2017}. Such modes of transmission are included in mathematical
models found in the recent literature: see \cite[Section~8]{Driessche:inpress}
for a good state of the art.

The name of the virus comes from the Zika forest in Uganda, 
where the virus was isolated for the first time in 1947. 
Up to very recent times, most of the Zika outbreaks have occurred 
in Africa with some sporadic outbreaks in Southeast Asia and also 
in the Pacific Islands. Since May 2015, Zika virus infections have been confirmed 
in Brazil and, since October 2015, other countries and territories 
of the Americas have reported the presence of the virus: see 
\cite{nature:2017}, where evolutionary trees, constructed using both newly 
sequenced and previously available Zika virus genomes, reveal how 
the recent outbreak arose in Brazil and spread across the Americas.
The subject attracted a lot of attention and is now under strong current investigations.
In \cite{Agusto:2017}, a deterministic model, 
based on a system of ordinary differential equations, 
was proposed for the study of the transmission dynamics of the Zika virus.
The model incorporates mother-to-child transmission as well as the development 
of microcephaly in newly born babies. The analysis shows that the disease-free equilibrium 
of the model is locally and globally asymptotically stable, whenever the associated
reproduction number is less than one, and unstable otherwise. 
A sensitivity analysis was carried out showing that
the mosquito biting, recruitment and death rates, are among the parameters
with the strongest impact on the basic reproduction number. Then,
some control strategies were proposed with the aim to reduce such values \cite{Agusto:2017}.
A two-patch model, where host-mobility is modeled using a Lagrangian approach, 
is used in \cite{Moreno:2017}, in order to understand the role
of host-movement on the transmission dynamics of Zika virus in
an idealized environment. Here we are concerned with the situation in Brazil
and its consequences on brain anomalies, in particular microcephaly, which occur 
in fetuses of infected pregnant woman. This is a crucial question as far as
the main problem related with Zika virus is precisely the number 
of neurological disorders and neonatal malformations \cite{zika-microcephaly}. 

Our study is based on the Zika virus situation reports for Brazil, 
as publicly available at the World Health Organization (WHO) web page. 
Based on a systematic review  of the literature up to 30th May 2016, 
the WHO has concluded that Zika virus infection during pregnancy is a cause of congenital 
brain abnormalities, including microcephaly. Moreover, another important conclusion 
of the WHO is that the Zika virus is a trigger of Guillain-Barr\'e syndrome \cite{WHO}.
Our analysis is focused on the number of confirmed cases of Zika in Brazil. 
For this specific case, an estimate of the population of the country is known, 
as well as the number of newborns. Moreover, from WHO data, it is possible 
to have an estimation of the number of newborn babies with neurological disorder. 
Our mathematical model allows to predict the number 
of cases of newborn babies with neurological disorder.

The manuscript is organized as follows. In Section~\ref{sec:2},
we introduce the model. Then, in Section~\ref{sec:3}, we prove
that the model is biologically well-posed, in the sense that
the solutions belong to a biologically feasible region
(see Theorem~\ref{lemma:posit:invariant}).
In Section~\ref{sec:4}, we give analytical expressions
for the two disease free equilibria of our dynamical system.
We compute the basic reproduction number $R_0$ of the system 
and study the relevant equilibrium point of interaction between 
women and mosquitoes, showing its local asymptotically stability when 
$R_0$ is less than one (see Theorem~\ref{thm:Stab:DFE}).
The sensitivity of the basic reproduction number $R_0$,
with respect to the parameters of the system, is investigated 
in Section~\ref{sec:sens} in terms of the normalized forward sensitivity index.
The possibility of occurrence of an endemic equilibrium is discussed
in Section~\ref{sec:5}. We end with Sections~\ref{sec:6} and \ref{sec:7}
of numerical simulations and conclusions, respectively.


\section{The Zika model}
\label{sec:2}

We consider women as the population under study. The total women population,
given by $N$, is subdivided into four mutually exclusive compartments, 
according to disease status, namely: susceptible pregnant women ($S$);
infected pregnant women ($I$); women who gave birth to babies without 
neurological disorder ($W$); women who gave birth to babies 
with neurological disorder due to microcephaly ($M$).

As for the mosquitoes population, since the Zika virus is transmitted 
by the same virus as Dengue disease, we shall use the same scheme as in 
\cite{MR3557143}. There are four state variables related to the (female) 
mosquitoes, namely: $A_{m}(t)$, which corresponds to the aquatic phase, 
that includes the egg, larva and pupa stages; $S_{m}(t)$, for the mosquitoes 
that might contract the disease (susceptible); $E_{m}(t)$, for the mosquitoes 
that are infected but are not able to transmit the Zika virus to humans (exposed); 
$I_{m}(t)$, for the mosquitoes capable of transmitting the Zika virus to humans (infected).

The following assumptions are considered in our model:
\begin{enumerate}
\item[(A.1)] there is no immigration of infected humans;

\item[(A.2)] the total human populations $N$ is constant;

\item[(A.3)] the coefficient of transmission of Zika virus 
is constant and does not varies with seasons;

\item[(A.4)] after giving birth, pregnant women 
are no more pregnant and they leave the population under study
at a rate $\mu_h$ equal to the rate of humans birth;

\item[(A.5)] death is neglected, as the period of pregnancy 
is much smaller than the mean humans lifespan;

\item[(A.6)] there is no resistant phase for the mosquito, 
due to its short lifetime.
\end{enumerate}
Note that the male mosquitoes are not considered in this study because 
they do not bite humans and consequently they do not influence the 
dynamics of the disease. The differential system that describes 
the model is composed by compartments of pregnant women and women who gave birth: 
\begin{equation}
\begin{cases}
\label{zikamodel1}
\displaystyle{\frac{d S}{dt} 
= \Lambda - (\phi B\beta_{mh} \frac{I_m}{N}  + (1-\phi) \tau_1 + \mu_h) S},\\[2mm]
\displaystyle{\frac{d I}{dt} 
= \phi  B\beta_{mh} \frac{I_m}{N} S - (\tau_2 + \mu_h) I},\\[2mm]
\displaystyle{\frac{d W}{dt} 
= (1-\phi) \tau_1 S + (1-\psi)\tau_2 I - \mu_h W},\\[2mm]
\displaystyle{\frac{d M}{dt} 
= \psi \tau_2 I - \mu_h M},
\end{cases}
\end{equation}
where $N = S + I + W + M$ is the total population (women). 
The parameter $\Lambda$ denotes the new pregnant women per week, 
$\phi$ stands for the fraction of susceptible pregnant women 
that gets infected, $B$ is the average daily biting (per day), 
$\beta_{mh}$ represents the transmission probability from infected mosquitoes 
$I_m$ (per bite), $\tau_{1}$ is the rate at which susceptible pregnant 
women $S$ give birth (in weeks), $\tau_{2}$ is the rate at which infected 
pregnant women $I$ give birth (in weeks), $\mu_{h}$ is the natural 
death rate for pregnant women, $\psi$ denotes the fraction 
of infected pregnant women $I$ that give birth babies 
with neurological disorder due to microcephaly. 
The above system \eqref{zikamodel1} is coupled 
with the dynamics of the mosquitoes 
\cite{RodriguesMonteiroTorresZinober:Dengue}:
\begin{equation}
\begin{cases}
\label{zikamodel2}
\displaystyle{\frac{dA_m}{dt}= \mu_b\left(1- \frac{A_m}{K}\right) 
\left(S_m +  E_m + I_m\right) - \left(\mu_A + \eta_A\right)A_m},\\[2mm]
\displaystyle{\frac{dS_m}{dt}= \eta_A A_m- \Big( B\beta_{hm} \frac{I}{N} + \mu_m \Big) S_m},\\[2mm]
\displaystyle{\frac{dE_m}{dt}=  \Big( B\beta_{hm} \frac{I}{N}\Big) S_m -(\eta_m + \mu_m) E_m},\\[2mm]
\displaystyle{\frac{dI_m}{dt}= \eta_m E_m  - \mu_m I_m},
\end{cases}
\end{equation}
where parameter $\beta_{hm}$ represents the transmission probability 
from infected humans $I_h$ (per bite), $\mu_b$ stands for the number 
of eggs at each deposit per capita (per day), 
$\mu_A$ is the natural mortality rate of larvae (per day),
$\eta_A$ is the maturation rate from larvae to adult (per day), 
$1/\eta_m$ represents the extrinsic incubation period (in days), 
$1/\mu_m$ denotes the average lifespan of adult mosquitoes (in days), 
and $K$ is the maximal capacity of larvae. See Table~\ref{statevar:parameters} 
for the description of the state variables and parameters of the Zika model 
\eqref{zikamodel1}--\eqref{zikamodel2}. 
\begin{table}[!htb]
\floatbox[\capbeside]{table}
{\caption{Variables and parameters of the Zika model \eqref{zikamodel1}--\eqref{zikamodel2}.}
\label{statevar:parameters}}
\centering
\begin{tabular}{|l|l|} \hline
{\scriptsize{Variable/Symbol}} & {\scriptsize{Description}}  \\ \hline
{\scriptsize{$S(t)$}} & {\scriptsize{susceptible pregnant women}} \\
{\scriptsize{$I(t)$}} & {\scriptsize{infected pregnant women}} \\
{\scriptsize{$W(t)$}} & {\scriptsize{women who gave birth to babies without neurological disorder}} \\
{\scriptsize{$M(t)$}} & {\scriptsize{women who gave birth to babies with neurological disorder due to microcephaly}} \\
{\scriptsize{$A_{m}(t)$}} & {\scriptsize{mosquitoes in the aquatic phase}} \\
{\scriptsize{$S_{m}(t)$}} & {\scriptsize{susceptible mosquitoes}} \\
{\scriptsize{$E_{m}(t)$}} & {\scriptsize{exposed mosquitoes}} \\
{\scriptsize{$I_{m}(t)$}} & {\scriptsize{infected mosquitoes}} \\ \hline
{\scriptsize{$\Lambda$}} & {\scriptsize{new pregnant women (per week)}} \\
{\scriptsize{$\phi$}} & {\scriptsize{fraction of $S$ that gets infected}} \\
{\scriptsize{$B$}} & {\scriptsize{average daily biting (per day)}}  \\
{\scriptsize{$\beta_{mh}$}} & {\scriptsize{transmission probability from $I_m$ (per bite)}}  \\
{\scriptsize{$\tau_{1}$}} & {\scriptsize{rate at which $S$ give birth (in weeks)}}   \\
{\scriptsize{$\tau_{2}$}} & {\scriptsize{rate at which $I$ give birth (in weeks)}}   \\
{\scriptsize{$\mu_{h}$}} & {\scriptsize{natural death rate}}  \\
{\scriptsize{$\psi$}} & {\scriptsize{fraction of $I$ that gives birth to babies with neurological disorder}}  \\
{\scriptsize{$\beta_{hm}$}} & {\scriptsize{transmission probability from $I_h$ (per bite)}}   \\
{\scriptsize{$\mu_b$}} & {\scriptsize{number of eggs at each deposit per capita (per day)}}   \\
{\scriptsize{$\mu_A$}} & {\scriptsize{natural mortality rate of larvae (per day)}}  \\
{\scriptsize{$\eta_A$}} & {\scriptsize{maturation rate from larvae to adult (per day)}}  \\
{\scriptsize{$1/\eta_m$}} & {\scriptsize{extrinsic incubation period (in days)}}  \\
{\scriptsize{$1/\mu_m$}} & {\scriptsize{average lifespan of adult mosquitoes (in days)}}  \\
{\scriptsize{$K$}} & {\scriptsize{maximal capacity of larvae}}  \\ \hline
\end{tabular}
\end{table}

In Figure~\ref{fig:flowchart}, we describe the behavior 
of the movement of individuals among these compartments. 
\begin{figure}[h!]
\centering 
\includegraphics[scale=0.8]{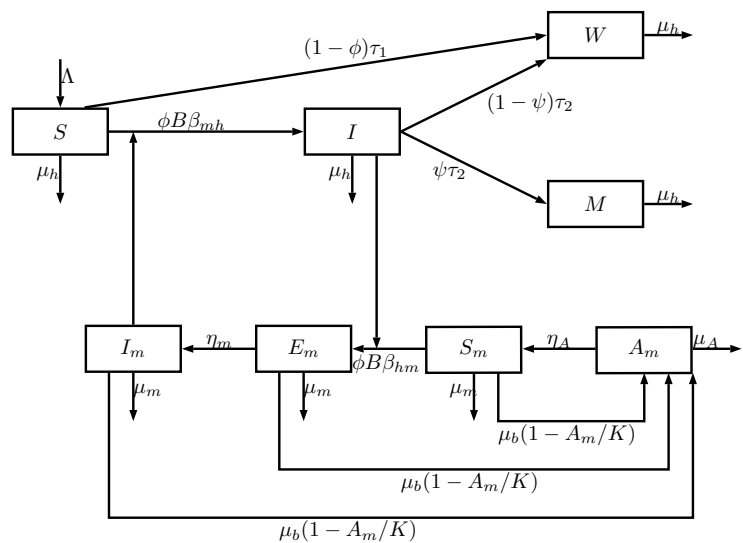}
\caption{Flowchart presentation of the compartmental model 
\eqref{zikamodel1}--\eqref{zikamodel2} for Zika.}
\label{fig:flowchart}
\end{figure}

We consider system \eqref{zikamodel1}--\eqref{zikamodel2} with given initial conditions
\begin{equation*}
S(0) = S_0, \quad I(0) = I_0, \quad W(0) = W_0, \quad M(0) = M_0,
\quad A_m(0) = A_{m0}, \quad S_m(0) = S_{m0}, \quad E_m(0) = E_{m0}, 
\quad I_m(0) = I_{m0},
\end{equation*}
with $\left( S_0, I_0, W_0, M_0, A_{m0}, S_{m0}, E_{m0}, I_{m0} \right) \in \mathbb{R}^{8}_{+}$.
In what follows, we assume $\beta_{mh} = \beta_{hm}$.


\section{Positivity and boundedness of solutions}
\label{sec:3}

Since the systems of equations \eqref{zikamodel1} and \eqref{zikamodel2} 
represent, respectively, human and mosquitoes populations, and all parameters 
in the model are nonnegative, we prove that, given nonnegative 
initial values, the solutions of the system are nonnegative. 
More precisely, let us consider the biologically feasible region
\begin{equation}
\label{eq:Omega}
\Omega = \left\{ (S, I, W, M, A_m, S_m, E_m, I_m) \in \mathbb{R}^8_+ \, : \, S + I + W + M 
\leq \frac{\Lambda}{\mu_h} \, , \quad A_m \leq k N_h \, , \quad S_m + E_m + I_m \leq m N_h \right\}.
\end{equation}
The following result holds.

\begin{theorem}
\label{lemma:posit:invariant}
The region $\Omega$ defined by \eqref{eq:Omega} is positively invariant for model
\eqref{zikamodel1}--\eqref{zikamodel2} with initial conditions
in $\mathbb{R}^{8}_{+}$.
\end{theorem}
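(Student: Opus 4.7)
The plan is to establish two things: first, nonnegativity of all eight components for all $t\geq 0$ when the initial data lie in $\mathbb{R}^{8}_{+}$, and second, that the three aggregate bounds defining $\Omega$ are preserved forward in time. For the nonnegativity part, I would apply a tangent-condition (quasi-positivity) argument to each face of $\mathbb{R}^{8}_{+}$: setting one compartment equal to zero, I would check that its time-derivative, evaluated along the flow, is nonnegative. For instance, along $S=0$ the first equation of \eqref{zikamodel1} gives $\dot S = \Lambda \geq 0$; along $I=0$ it gives $\dot I = \phi B \beta_{mh} I_{m} S / N \geq 0$; the compartments $W$, $M$, $A_{m}$, $S_{m}$, $E_{m}$, $I_{m}$ are handled identically, since in every equation the outgoing terms are proportional to the compartment being tested and the remaining input terms are manifestly nonnegative. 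This rules out a trajectory leaving $\mathbb{R}^{8}_{+}$.

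For the human bound, I would set $N = S + I + W + M$ and sum the four equations of \eqref{zikamodel1}; the intra-human transitions cancel pairwise and leave the scalar linear ODE
\[
\frac{dN}{dt} \;=\; \Lambda - \mu_h N,
\]
whose unique equilibrium $\Lambda/\mu_h$ is globally attracting. A standard comparison (or explicit integration) then shows that $N(0)\leq \Lambda/\mu_h$ implies $N(t)\leq \Lambda/\mu_h$ for all $t\geq 0$, and in general $\limsup_{t\to\infty} N(t)\leq \Lambda/\mu_h$, so the face $S+I+W+M = \Lambda/\mu_h$ cannot be crossed outward. For the aquatic mosquito bound I would use that the logistic prefactor $\mu_b(1 - A_m/K)$ in the first equation of \eqref{zikamodel2} becomes nonpositive as soon as $A_m$ reaches $K$, while $-(\mu_A+\eta_A)A_m$ stays negative, so the corresponding face of $\Omega$ is inward-pointing. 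Finally, summing the last three equations of \eqref{zikamodel2} yields
\[
\frac{d(S_m + E_m + I_m)}{dt} \;=\; \eta_A A_m - \mu_m (S_m + E_m + I_m),
\]
and combining this linear differential inequality with the already-proved bound on $A_m$ gives the remaining bound on the adult-mosquito total by the same comparison argument.

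The main obstacle is not any single computation but the ordering of the estimates and the identification of the constants $k$ and $m$ in \eqref{eq:Omega}: the adult-mosquito bound depends on the aquatic bound, and both are written relative to the human total $N$, so the three inequalities must be established in the right sequence (humans, then aquatic, then adults) and the constants must be chosen so that each face of $\Omega$ is tangential-or-inward to the vector field. Once that bookkeeping is fixed, every verification reduces to a direct sign check of the right-hand side on the relevant boundary piece of $\Omega$, and positive invariance follows by the classical Nagumo-type invariance principle applied componentwise.
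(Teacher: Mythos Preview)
Your proposal is correct, but the route differs from the paper's. The paper does not check the boundary faces one by one; instead it recasts the whole system as $\dot X = M(X)X + F$ with $F\ge 0$ and observes that $M(X)$ is a Metzler matrix (all off-diagonal entries nonnegative), then invokes a standard result from \cite{Abate} to conclude positive invariance of $\mathbb{R}^{8}_{+}$ directly. Your tangent-condition argument is the elementary, unpackaged version of the same fact: quasi-positivity of the vector field on each coordinate hyperplane is exactly the componentwise statement of the Metzler property, so the two approaches are equivalent for the nonnegativity part, with the paper trading self-containedness for brevity. Where your proposal goes further is in the upper bounds: you actually sum the human equations to get $\dot N=\Lambda-\mu_h N$, and sum the adult-mosquito equations to get $\dot{(S_m+E_m+I_m)}=\eta_A A_m-\mu_m(S_m+E_m+I_m)$, and then close the estimates by comparison. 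The paper's proof, by contrast, jumps from invariance of $\mathbb{R}^{8}_{+}$ to invariance of $\Omega$ in a single sentence, without writing out these aggregate ODEs or identifying the constants $k$ and $m$. So your version is the more complete one; the only real work you flag as remaining---pinning down $k$ and $m$ so that the aquatic and adult bounds form an invariant box relative to $N_h$---is not carried out in the paper either.
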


\begin{proof}
Our proof is inspired by \cite{RodriguesMonteiroTorresZinober:Dengue}. 
System \eqref{zikamodel1}--\eqref{zikamodel2} 
can be rewritten in the following way:
$\displaystyle \frac{dX}{dt} = M(X) X + F$,
where $X = (S, I, W, M, A_m, S_m, E_m, I_m)$,
$M(X) =
\begin{pmatrix}
M_1 & M_2 
\end{pmatrix}$
with 
\begin{equation*}
M_1 = 
\begin{pmatrix}
-\phi  B\beta_{mh} \frac{I_m}{N}   - (1-\phi) \tau_1 - \mu_h & 0 & 0 & 0 \\
\phi B\beta_{mh} \frac{I_m}{N}  & - \tau_2 - \mu_h &  0 & 0\\
(1-\phi) \tau_1 & - (1-\psi)\tau_2 & - \mu_h & 0\\
0 & \psi \tau_2 & 0 & - \mu_h\\
0 & 0 & 0 & 0 \\
0 & 0 & 0 & 0 \\
0 & 0 & 0 & 0 \\
0 & 0 & 0 & 0
\end{pmatrix} ,
\end{equation*}
\begin{equation*}
M_2 = 
\begin{pmatrix}
0 & 0 & 0 & 0\\
0 & 0 & 0 & 0\\
0 & 0 & 0 & 0\\
0 & 0 & 0 & 0\\
-\mu_b \frac{S_m +  E_m + I_m}{K} - \mu_A - \eta_A & \mu_b & \mu_b & \mu_b\\
\eta_A & - B\beta_{hm} \frac{I}{N} - \mu_m & 0 & 0\\
0 & B\beta_{hm} \frac{I}{N}  & -\eta_m - \mu_m & 0\\
0 & 0 & \eta_m & - \mu_m
\end{pmatrix},
\end{equation*}
and $F = \left( \Lambda, 0, 0, 0, 0, 0, 0, 0 \right)^T$. 
Matrix $M(X)$ is Metzler, i.e., the off diagonal elements of $A$ are nonnegative. 
Using the fact that $F \geq 0$, system $\displaystyle \frac{dX}{dt} = M(X) X + F$ 
is positively invariant in $R^8_+$ \cite{Abate}, which means that any trajectory 
with initial conditions in $R^8_+$ remains in $\Omega$ for all $t > 0$. 
\end{proof}


\section{Existence and local stability of the disease-free equilibria}
\label{sec:4}

System \eqref{zikamodel1}--\eqref{zikamodel2} admits two disease free equilibrium 
points (DFE), obtained by setting the right-hand sides of the equations 
in the model to zero: the DFE $E_1$, given by
\begin{equation*}
E_1 = (S^*, I^*, W^*, M^*, A_m^*, S_m^*, E_m^*, I_m^*)
= \left({\frac {\Lambda}{\tau_1(1- \phi)+{\mu_h}}}, 0, 
\frac {\tau_1 \, \Lambda\, \left( 1-\phi \right) }{\mu_h \, 
\left( \tau_1(1 -\phi)+ \mu_h \right)}, 0, 0, 0, 0, 0 \right),
\end{equation*}
which corresponds to the DFE in the absence of mosquitoes,
and the DFE in the presence of mosquitoes, $E_2$, given by 
\begin{equation*}
E_2 = (S^*, I^*, W^*, M^*, A_m^*, S_m^*, E_m^*, I_m^*) 
= \left( \frac {\Lambda}{\tau_1(1 - \phi) + \mu_h}, 0, \frac{\tau_1 \,
\Lambda\, \left( 1 - \phi \right) }{\mu_h\, \left( \tau_1(1- \phi) + \mu_h \right) }, 
0, -\frac{K \varrho }{\mu_b \eta_A }, -\frac{K \varrho }{\mu_b \mu_{m}}, 0, 0 \right),
\end{equation*}
where
\begin{equation}
\label{eq:varrho}
\varrho = \eta_A (\mu_m - \mu_b) + \mu_A \mu_m.
\end{equation} 
In what follows, we consider only the DFE $E_2$, because this equilibrium point 
considers interaction between humans and mosquitoes, being therefore 
more interesting from the biological point of view. 

The local stability of $E_2$ can be established using 
the next-generation operator method 
on \eqref{zikamodel1}--\eqref{zikamodel2}.
Following the approach of \cite{van:den:Driessche:2002}, 
we compute the basic reproduction number $R_0$ of system 
\eqref{zikamodel1}--\eqref{zikamodel2} writing the right-hand 
side of \eqref{zikamodel1}--\eqref{zikamodel2}
as $\mathcal{F} - \mathcal{V}$ with
\begin{equation*}
\mathcal{F} =
\begin{pmatrix}
0 \\
\phi  B\beta_{mh} \frac{I_m}{N}  S\\
0 \\
0 \\
0 \\
0 \\
 B\beta_{hm} \frac{I}{N}S_m \\
0
\end{pmatrix}, 
\qquad
\mathcal{V} =
\begin{pmatrix}
-\Lambda +(\phi   B\beta_{mh} \frac{I_m}{N}  + (1-\phi) \tau_1 + \mu_h) S\\
( \tau_2 + \mu_h) I\\
-(1-\phi) \tau_1 S + (1-\psi)\tau_2 I + \mu_h W\\
\mu_h M-\psi \tau_2 I \\
-\mu_b(1- \frac{A_m}{K}) (S_m +  E_m + I_m) + (\mu_A + \eta_A)A_m\\
-\eta_A A_m +  (B\beta_{hm} \frac{I}{N}  + \mu_m) S_m\\
(\eta_m + \mu_m) E_m\\
-\eta_m E_m  + \mu_m I_m 
\end{pmatrix}.
\end{equation*}
Then we consider the Jacobian matrices associated with $\mathcal{F}$
and $\mathcal{V}$:
\begin{equation*}
J_{\mathcal{F}} =
\begin{pmatrix} 0&0&0&0&0&0&0&0\\ 
\noalign{\medskip}{
\frac {\phi\,\epsilon \, I_m \, \left( I+W+M \right) }{N}}
&-{\frac {\phi\,\epsilon \, I_m \,S}{ N}}&-{\frac {\phi\,\epsilon I_m \,S}{ N}}
&-{\frac {\phi\,\epsilon I_m \,S}{ N}}&0&0&0&{\phi\,\epsilon\,S}\\ 
\noalign{\medskip}0&0&0&0&0&0&0&0\\ \noalign{\medskip}0&0&0&0&0&0&0&0\\ 
\noalign{\medskip}0&0&0&0&0&0&0&0\\ \noalign{\medskip}0&0&0&0&0&0&0&0\\ 
\noalign{\medskip}-{\frac {\epsilon\, I \, S_m}{N}}&{\frac {\epsilon\, S_m (S + W + M)}{ N}}
&-{\frac {\epsilon I \, S_m}{ N}}&-{\frac {\epsilon\, I\, S_m}{ N}}&0&{\epsilon\, I}&0&{\epsilon\, S_m}\\
\noalign{\medskip}0&0&0&0&0&0&0&0
\end{pmatrix},
\end{equation*}
where $\epsilon=B \beta_{mh}/N$, and
$J_{\mathcal{V}} =
\begin{pmatrix} 
J_{\mathcal{V}1} & J_{\mathcal{V}2}
\end{pmatrix}$
with
\begin{equation*}
J_{\mathcal{V}1} =
\begin{pmatrix} 
A &-{\frac {\phi \epsilon I_m S}{ N}}&-{\frac {\phi\,\epsilon I_m S}{ N}}
&-{\frac {\phi\,\epsilon I_m S}{N}}\\ 
\noalign{\medskip} 0&\tau_2 + \mu_h &0&0\\ 
\noalign{\medskip}\tau_1 \, \left( \phi-1 \right) & \left( \psi-1 \right) \tau_2 & \mu_h &0\\ 
\noalign{\medskip}0&-\psi \tau_2 &0& \mu_h\\ \noalign{\medskip}0&0&0&0\\ 
\noalign{\medskip}-{\frac {\epsilon I  S_m}{ N}}&{\frac {\epsilon S_m (S + W + M) }{ N}}
&-{\frac {\epsilon I S_m}{ N}}&-{\frac {\epsilon I S_m}{ N}}\\ \noalign{\medskip}0&0&0&0\\ 
\noalign{\medskip} 0&0&0&0
\end{pmatrix}, \quad
J_{\mathcal{V}2} =
\begin{pmatrix} 
0&0&0&{\phi\,\epsilon\,S}\\ 
\noalign{\medskip} 0&0&0&0\\ 
\noalign{\medskip} 0&0&0&0\\ 
\noalign{\medskip} 0&0&0&0\\ 
\noalign{\medskip} {\frac { \mu_b  (S_m + E_m + I_m) + (\mu_A+ \eta_A) K}{K}}
&{\frac {\mu_b \, \left( A_m-K \right) }{K}}&{\frac { \mu_b \,
\left( A_m-K \right) }{K}}&{\frac {\mu_b \, \left( A_m-K \right) }{K}}\\ 
\noalign{\medskip}-\eta_A &{\frac {\epsilon I}{N}+ \mu_{m} }&0&0\\ 
\noalign{\medskip} 0&0& \eta_M + \mu_{m} &0\\ 
\noalign{\medskip}0&0&-\eta_M & \mu_{m} 
\end{pmatrix}
\end{equation*}
and
$A = \displaystyle \tau_1(1-\phi)+\mu_h + \frac{\phi B \beta_{mh} I_m (N- S)}{N^2}$.
The basic reproduction number $R_0$ is obtained as the spectral radius of the matrix
$J_{\mathcal{F}} \times (J_{\mathcal{V}})^{-1}$ at the disease-free
equilibrium $E_2$, and is given by
\begin{equation}
\label{R0initial}
R_0 = \frac{\sqrt {-\mu_b \Lambda \left( \mu_h + \tau_2 \right) 
\left( \tau_1(1- \phi)+ \mu_h \right)  \left( \eta_m + \mu_m \right) K\phi \eta_m \, 
\left( \eta_A (\mu_m  - \mu_b) + \mu_A  \mu_m \right) } \beta_{mh}
\mu_h B}{\mu_b \Lambda  \left( \mu_h + \tau_2 \right)  \left(\tau_1(1 - \phi)
+ \mu_h \right)  \left( \eta_m + \mu_m \right) \mu_m}
\end{equation}
or
\begin{equation}
\label{eq:R0}
R_0^2 = \frac{\phi \, B^2 \, \beta_{mh}^{2} \, K \, \mu_h^{2} \, 
\eta_m \left( \mu_b \eta_A  - \mu_m (\mu_A + \eta_A)\right) }{\mu_m^{2} 
\mu_b \,\Lambda  \left( \mu_h + \tau_2 \right) \left( \eta_m + \mu_m \right)  
\left( \tau_1(1 - \phi) + \mu_h  \right) }.
\end{equation}
The disease-free equilibrium $E_2$ is locally asymptotically stable 
if all the roots of the characteristic equation of the linearized 
system associated to \eqref{zikamodel1}--\eqref{zikamodel2} 
at the DFE $E_2$ have negative real parts. 
The characteristic equation associated with $E_2$ is given by 
\begin{equation}
\label{eq:car}
p_1(\lambda) p_2 (\lambda) p_3(\lambda)  p_4(\lambda) = 0 
\end{equation}
with
$p_1(\lambda) = \lambda + \mu_h + \tau_1 (1- \phi)$, 
$p_2(\lambda) =  \left( \lambda+ \mu_h \right) ^{2}$, 
$p_3(\lambda) = -\lambda^{2} \mu_m + \left( -\mu_b  \eta_A - \mu_m^{2}
\right) \lambda + \mu_m ( \mu_m (\mu_A + \eta_A) - \mu_b \eta_A)$
and $p_4(\lambda) = a_3 \lambda^3 + a_2 \lambda^2 + a_1 \lambda + a_0$, where 
\begin{equation*}
\begin{split}
a_0 &=-\frac{ \mu_m^{2} \mu_b \,\Lambda  \left( \mu_h + \tau_2 \right) 
\left( \eta_m + \mu_m \right)  \left( \tau_1(1 - \phi) + \mu_h  \right) 
- \phi \, B^2 \, \beta_{mh}^{2} \, K \, \mu_h^{2} \, \eta_m 
\left( \mu_b \eta_A  - \mu_m (\mu_A + \eta_A)\right)}{\Lambda\,\mu_m^{2} 
\mu_b \left( \tau_1(1- \phi ) + \mu_h \right)},\\
a_1 &= -\frac{\mu_m^{2}+ \left( 2\, \tau_2 + 2\,\mu_h + \eta_m \right) 
\mu_m + \eta_m  (\tau_2 + \mu_h) }{\mu_m}, \quad
a_2 = -{\frac {\mu_h + \eta_m + \tau_2 +2\, \mu_m }{ \mu_m}}, \quad
a_3 = -\frac{1}{\mu_m}.
\end{split}
\end{equation*}
By the Routh--Hurwitz criterion, all the roots of the characteristic equation 
\eqref{eq:car} have negative parts whenever $R_0 < 1$. 
We have just proved the following result. 

\begin{theorem}
\label{thm:Stab:DFE}
The disease free equilibrium in the presence of non-infected mosquitoes, 
$E_2$, is locally asymptotically stable if $R_0 < 1$ and unstable if $R_0 > 1$. 
\end{theorem}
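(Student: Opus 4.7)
The plan is to exploit the factorization $p_1(\lambda)\,p_2(\lambda)\,p_3(\lambda)\,p_4(\lambda)=0$ of the characteristic polynomial at $E_2$ in \eqref{eq:car} and show, via the Routh--Hurwitz criterion, that every root lies in the open left half-plane precisely when $R_0<1$. An alternative would be to verify the hypotheses (A1)--(A5) of \cite{van:den:Driessche:2002} for the splitting $\mathcal{F}-\mathcal{V}$ and invoke their Theorem~2 directly; but since the factored characteristic polynomial is already in hand, the direct Routh--Hurwitz route seems cleaner and also produces the instability half of the statement as a by-product.

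First I would dispose of the three easy factors, which place no restriction on $R_0$. The unique root of $p_1$ is $\lambda=-\mu_h-\tau_1(1-\phi)<0$, and $p_2$ has the double root $\lambda=-\mu_h<0$. For the quadratic $p_3$, I write
\[
-p_3(\lambda)=\mu_m\lambda^2+(\mu_b\eta_A+\mu_m^2)\lambda+\mu_m\bigl(\mu_b\eta_A-\mu_m(\mu_A+\eta_A)\bigr)
\]
and use the feasibility of $E_2$: the positivity of $A_m^\ast$ and $S_m^\ast$ forces $\varrho<0$ in \eqref{eq:varrho}, equivalently $\mu_b\eta_A>\mu_m(\mu_A+\eta_A)$, so all three coefficients of $-p_3$ are strictly positive and both roots have negative real part.

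The substance of the proof lies in the cubic $p_4$. Since $a_3=-1/\mu_m<0$, I would apply the Routh--Hurwitz test to $-p_4$, checking positivity of $-a_0,-a_1,-a_2,-a_3$ and the determinant inequality $(-a_1)(-a_2)>(-a_3)(-a_0)$, i.e.\ $a_1 a_2>a_0 a_3$. The positivity of $-a_1,-a_2,-a_3$ is immediate from their explicit forms as positive combinations of parameters. A direct comparison of the numerator of $-a_0$ with \eqref{eq:R0} yields the key identity $-a_0=(\mu_h+\tau_2)(\eta_m+\mu_m)(1-R_0^2)$, so $-a_0>0$ exactly when $R_0<1$. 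This simultaneously delivers instability for $R_0>1$: the signs of the coefficients of $-p_4$ then become $(+,+,+,-)$ and Descartes' rule of signs produces a positive real root, destabilizing $E_2$.

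The step I expect to be the main obstacle is the Hurwitz determinant inequality $a_1 a_2>a_0 a_3$, which is the only condition that is not purely parametric because of the $R_0$-dependence in $a_0$. My strategy is to use $1-R_0^2\le 1$ to upper-bound $(-a_3)(-a_0)$ by $(\mu_h+\tau_2)(\eta_m+\mu_m)/\mu_m$ and then verify the resulting parameter-only inequality, for which the factorization $\mu_m^2+(2\tau_2+2\mu_h+\eta_m)\mu_m+\eta_m(\tau_2+\mu_h)=(\mu_m+\tau_2+\mu_h)(\mu_m+\eta_m)+(\tau_2+\mu_h)\mu_m$ and the trivial bound $\mu_h+\eta_m+\tau_2+2\mu_m\ge\mu_m$ should collapse the comparison to the obvious $(\mu_m+\tau_2+\mu_h)\ge(\tau_2+\mu_h)$. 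Should the algebra turn out more stubborn than expected, I would fall back on \cite[Thm.~2]{van:den:Driessche:2002} for the stability half, which bypasses the cubic analysis entirely.
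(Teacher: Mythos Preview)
Your proposal is correct and follows essentially the same approach as the paper: apply the Routh--Hurwitz criterion to the factored characteristic polynomial \eqref{eq:car} at $E_2$. The paper's own argument is a bare one-line invocation of Routh--Hurwitz, whereas you actually carry out the verification---disposing of $p_1,p_2,p_3$, establishing the key identity $-a_0=(\mu_h+\tau_2)(\eta_m+\mu_m)(1-R_0^2)$, and checking the Hurwitz determinant inequality for the cubic (your proposed bound via $1-R_0^2\le 1$ and the factorization of the numerator of $-a_1$ does work)---and you also supply the instability half via Descartes' rule, which the paper leaves implicit.
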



\section{Sensitivity of the basic reproduction number}
\label{sec:sens}

The sensitivity of the basic reproduction number $R_0$ is an important issue 
because it determines the model robustness to parameter values. The sensitivity 
of $R_0$ with respect to model parameters is here measured 
by the so called \emph{sensitivity index}. 

\begin{definition}[See \cite{Chitnis,Kong}]
\label{def:sense}
The normalized forward sensitivity index of a variable $\upsilon$ 
that depends differentiability on a parameter $p$ is defined by
$\displaystyle \Upsilon_{p}^{\upsilon} 
:= \frac{\partial \upsilon}{\partial p} \times \frac{p}{|\upsilon|}$.
\end{definition}

\begin{remark}
If $\Upsilon_{p}^{\upsilon} = + 1$, then an increase (decrease) 
of $p$ by $x\%$ increases (decreases) $\upsilon$ by $x\%$; 
if $\Upsilon_{p}^{\upsilon} = - 1$, then an increase (decrease) 
of $p$ by $x \%$ decreases (increases) $\upsilon$ by $x\%$. 
\end{remark}

From \eqref{R0initial} and Definition~\ref{def:sense}, 
it is easy to derive the normalized forward sensitivity index 
of $R_0$ with respect to the average daily biting $B$ 
and to the transmission probability $\beta_{mh}$
from infected mosquitoes $I_m$ per bite.

\begin{proposition}
The normalized forward sensitivity index 
of $R_0$ with respect to $B$ and $\beta_{mh}$ is one:
$\Upsilon_{B}^{R_0} = 1$ and $\Upsilon_{\beta_{mh}}^{R_0} = 1$.
\end{proposition}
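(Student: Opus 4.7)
The plan is to read the dependence of $R_0$ on $B$ and $\beta_{mh}$ directly from the closed-form expression \eqref{eq:R0} and then apply Definition~\ref{def:sense}. Observe that in \eqref{eq:R0}, the parameters $B$ and $\beta_{mh}$ appear only in the numerator of $R_0^2$ through the factor $B^{2}\beta_{mh}^{2}$; none of the other quantities ($\phi$, $K$, $\mu_h$, $\eta_m$, $\mu_b$, $\eta_A$, $\mu_m$, $\mu_A$, $\Lambda$, $\tau_1$, $\tau_2$) depend on $B$ or $\beta_{mh}$. Taking the positive square root we can therefore write
\begin{equation*}
R_0 = B\,\beta_{mh}\, C, \qquad
C := \mu_h\sqrt{\frac{\phi\, K\, \eta_m\bigl(\mu_b \eta_A - \mu_m(\mu_A+\eta_A)\bigr)}{\mu_m^{2}\, \mu_b\, \Lambda\, (\mu_h+\tau_2)(\eta_m+\mu_m)\bigl(\tau_1(1-\phi)+\mu_h\bigr)}},
\end{equation*}
where $C$ is a positive constant with respect to $B$ and $\beta_{mh}$ (positivity of the radicand, equivalent to $\varrho<0$ in \eqref{eq:varrho}, is precisely the condition ensuring that $E_2$ is biologically meaningful and that $R_0$ is real, so I may take it for granted here).

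From this factorization the two sensitivity indices follow at once. First, differentiating,
\begin{equation*}
\frac{\partial R_0}{\partial B} = \beta_{mh}\, C = \frac{R_0}{B},
\qquad
\frac{\partial R_0}{\partial \beta_{mh}} = B\, C = \frac{R_0}{\beta_{mh}}.
\end{equation*}
Then, inserting into the formula $\Upsilon_p^{R_0} = (\partial R_0/\partial p)\cdot p/|R_0|$ from Definition~\ref{def:sense}, each factor of the parameter cancels cleanly against its reciprocal, yielding $\Upsilon_{B}^{R_0}=1$ and $\Upsilon_{\beta_{mh}}^{R_0}=1$.

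There is essentially no obstacle in this proof: the whole argument is the elementary remark that the logarithmic derivative of a monomial $p^k$ with respect to $p$ is $k/p$, so that any quantity of the form $R_0 = p \cdot (\text{stuff independent of }p)$ automatically has normalized sensitivity index $+1$ in $p$. The only thing worth a brief sentence is the justification that the expression under the square root in \eqref{eq:R0} is positive at the regime of interest, so that $R_0$ is a smooth positive function of $B$ and $\beta_{mh}$ and the differentiation above is legitimate.
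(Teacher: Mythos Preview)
Your proof is correct and follows essentially the same approach as the paper, which simply states that the result is a direct consequence of the closed-form expression for $R_0$ and Definition~\ref{def:sense}. You have merely written out explicitly the factorization $R_0=B\,\beta_{mh}\,C$ implicit in \eqref{R0initial} and carried through the one-line computation.
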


\begin{proof}
It is a direct consequence of \eqref{R0initial} and Definition~\ref{def:sense}. 
\end{proof}

The sensitivity index of $R_0$ with respect to $\phi$, $\eta_m$, $\mu_h$, 
$\eta_A$, $\mu_m$, $\tau_1$, $\tau_2$, $\mu_b$, $\mu_A$,  $K$ 
and $\Lambda$ is given, respectively, by
\begin{gather*}
\Upsilon_{\phi}^{R_0} = {\frac {\mu_h +\tau_1}{2 (\tau_1(1 -\phi) + \mu_h) }}, 
\quad \Upsilon_{\eta_m}^{R_0} = \frac{\mu_m}{ 2(\eta_m+ \mu_m)}, 
\quad \Upsilon_{\mu_h}^{R_0} = \frac { ( \tau_1 (\mu_h  + 2 \tau_2))(1 - \phi)  
+ \tau_2 \mu_h }{ 2 \left( \mu_h +\tau_2\right)\left( \tau_1( 1 -\phi) + \mu_h \right) },
\quad \Upsilon_{\eta_A}^{R_0} = \frac { \left( \mu_m - \mu_b  \right) \eta_A}{2( \mu_m (\eta_A + \mu_A) 
- \eta_A  \mu_b )},\\ 
\Upsilon_{\mu_m}^{R_0} =\frac{\mu_m\left( (\eta_A + \mu_A)(-\eta_m  - 2 \mu_m) + 3 \eta_A \mu_b\right) 
+ 2 \eta_A \eta_m \mu_b}{2 \left( \eta_M + \mu_M  \right)  
\left( \mu_M (\eta_A + \mu_A) - \eta_A \mu_b  \right) },\quad
\Upsilon_{\tau_1}^{R_0} = {\frac { \tau_1 \left( -1+\phi \right) }{2(\tau_1( 1-\phi)+ \mu_h )}}, 
\quad \Upsilon_{\tau_2}^{R_0} = - \frac {\tau_2}{ 2(\mu_h +\tau_2)}, \\
\Upsilon_{\mu_b}^{R_0} = - \frac{\mu_M  \left( \mu_A + \eta_A \right)}{2
\left( \mu_M( \eta_A  + \mu_A) - \eta_A \mu_b \right)},\quad 
\Upsilon_{\mu_A}^{R_0} = \frac {\mu_A \mu_m}{2( \mu_m(\eta_A  + \mu_A) - \eta_A \mu_b )}, 
\quad \Upsilon_{K}^{R_0} = \frac{1}{2}, 
\quad \Upsilon_{\Lambda}^{R_0} = - \frac{1}{2}.
\end{gather*}
In Section~\ref{sec:6}, we compute the previous sensitivity indexes for data from Brazil. 

To analyze the sensitivity of $R_{0}^{2}$ with respect to all the parameters involved, 
we compute appropriate derivatives:
\begin{equation}
\label{derivativesR02}
\begin{aligned}
\frac{\partial R_{0}^{2}}{\partial \phi}
&=\frac{\mu _h+\tau _1}{\phi \left(\mu _h+(1-\phi) \tau _1\right)} R_{0}^{2}, \quad
\frac{\partial R_{0}^{2}}{\partial B}=\frac{2}{B} R_{0}^{2}, \quad
\frac{\partial R_{0}^{2}}{\partial \beta_{mh}}=\frac{2}{\beta_{mh}} R_{0}^{2}, \quad
\frac{\partial R_{0}^{2}}{\partial \eta_{m}}
=\frac{\mu _m}{\eta _m (\mu _m+\eta _m)} R_{0}^{2}, \\
\frac{\partial R_{0}^{2}}{\partial \mu_{b}}
&=\frac{\mu _m \left(\eta _a+\mu _a\right)}{\mu _b \left(\eta _a \mu _b
-\mu _m \left(\eta _a+\mu _a\right)\right)} R_{0}^{2}, \quad
\frac{\partial R_{0}^{2}}{\partial \eta_{A}}
=\frac{\mu _b-\mu _m}{\eta _a \mu _b-\mu _m \left(\eta _a+\mu _a\right)} R_{0}^{2}, \quad
\frac{\partial R_{0}^{2}}{\partial \mu_{A}}
=\frac{\mu _m}{\mu _m \left(\eta _a+\mu _a\right)-\eta _a \mu _b} R_{0}^{2},\\
\frac{\partial R_{0}^{2}}{\partial \mu_{m}}
&=\left( \frac{\eta _a+\mu _a}{\mu _m \left(\eta _a+\mu _a\right)-\eta _a \mu _b}
-\frac{1}{\eta _m+\mu _m}-\frac{2}{\mu _m} \right) R_{0}^{2},\quad 
\frac{\partial R_{0}^{2}}{\partial \Lambda}=-\frac{1}{\Lambda}R_{0}^{2}, \quad
\frac{\partial R_{0}^{2}}{\partial \tau_{2}}=-\frac{1}{\mu_{h}+\tau_{2}} R_{0}^{2}.
\end{aligned}
\end{equation}
In Section~\ref{sec:6} we compute the 
values of these expressions according with the numerical values 
given in Table~\ref{parameters}, up to the $R_{0}^{2}$ factor, 
which appears in all the right-hand expressions 
of \eqref{derivativesR02}, in order to determine 
which are the most and less sensitive parameters. 


\section{Existence and stability analysis of the endemic equilibrium point} 
\label{sec:5}

The system \eqref{zikamodel1}--\eqref{zikamodel2} has one endemic equilibrium (EE)
with biologic meaning whenever $R_0 > 1$. This EE is given by
$E^+ = \left( S^*_+, I^*_+, W^*_+, M^*_+, A^*_{m+}, S^*_{m+}, E^*_{m+}, I^*_{m+} \right)$ 
with $\displaystyle S^*_+ = \frac{\zeta_{S}}{d^*}$,
$\displaystyle I^*_+ = \frac{\zeta_{I}}{ d^* (\mu_h+\tau_2)}$, 
$\displaystyle W^*_+ = \frac{\zeta_{W}}{d^* \, \mu_h \, (\mu_h + \tau_2)}$, 
$\displaystyle M^*_+ = \frac{\zeta_{M}}{d^*  \, \mu_h \, (\mu_h + \tau_2)}$,
$\displaystyle A^*_{m+} = \frac{-K \varrho }{ \mu_b  \eta_A }$, 
$\displaystyle S^*_{m+} = \frac{\zeta_{S_{m}}}{ d_{m}^{*}}$,
$\displaystyle E^*_{m+} =\frac{\zeta_{E_{m}} }{ (\eta_{m}+\mu_{m}) d_{m}^{*}}$,
$\displaystyle I^*_{m+} = E^*_{m+} \frac{\eta_{m}}{\mu_{m}}$,
where $\varrho$ is defined in \eqref{eq:varrho} and
\begin{align*}
d^*&= \left[ BK \beta_{mh}\, \eta_m \,\phi\,\mu_h \, \varrho 
-\Lambda\,\mu_m \mu_b  \left( \eta_m \,\tau_1\, \left( 1-
\phi \right) + \mu_m \,\tau_1\, \left( 1-\phi \right) + \mu_h \,
\left( \eta_m + \mu_m \right)  \right)  \right] \mu_h\,\beta_{mh}\,B,\\
d_{m}^{*}&=\mu_b \left( B \beta_{mh} \mu_h 
+ \mu_m  \mu_h + \mu_m \tau_2 \right) \eta_m \phi B \beta_{mh} \mu_h > 0,\\
\zeta_{S}&=-\mu_b\, \left( \eta_m + \mu_m \right)  \left( B \beta_{mh}\,\mu_h 
+ \mu_m  (\mu_h + \tau_2) \right) \mu_m 	\Lambda^{2}, \\
\zeta_{I}&=\Lambda\, \left( {B}^{2}K \beta_{mh}^{2} \eta_m\,\phi\, \mu_h^{2}  \varrho 
+\Lambda\, \mu_m^{2} \mu_b \, \left( \mu_h + \tau_2 \right)  
\left(  \left( 1-\phi \right) \tau_1 + \mu_h \right)  \left( \eta_m + \mu_m \right)  \right) , \\
\zeta_{W}&= \Lambda\,K \beta_{mh}^{2} \eta_m  \mu_h^{2} \phi \tau_2 
\left( 1-\psi \right)  \varrho  B^2 - \Lambda^2 \beta_{mh} \mu_m \mu_b \mu_h \tau_1  
\left( 1-\phi \right)  \left( \mu_h +\tau_2 \right)  \left( \eta_m + \mu_m \right) B \\ 
&\quad - (\Lambda^2 \mu_m^2 \mu_b \left(\mu_h + \tau_2 \right)  
\left( \mu_h \tau_2 \left( -1 + \psi \right) +\psi \tau_1 \tau_2 
\left( 1-\phi \right) +\mu_h \tau_1 \left( 1-\phi \right)\right)  
\left( \eta_m + \mu_m  \right)), \\
\zeta_{M}&=\Lambda\, \left( \mu_h^{2} \varrho  
K \beta_{mh}^{2}{B}^{2} \eta_m \phi+\Lambda \mu_m^{2} \mu_b 
\left( \mu_h +\tau_2 \right)  \left( \tau_1\, \left( 1-\phi \right) 
+ \mu_h \right)  \left( \eta_m + \mu_m \right)  \right) \psi\,\tau_2, \\
\zeta_{S_{m}}&=  -\left( \mu_h +\tau_2 \right)  \left( BK \beta_{mh} \eta_m
\phi \mu_h \varrho  -\Lambda \mu_m \mu_b \left( (1-\phi)\,\tau_1+ \mu_h \right)  
\left( \eta_m + \mu_m \right)  \right), \\
\zeta_{E_{m}}&=-\mu_h^{2} \varrho
K \beta_{mh}^{2} {B}^{2} \eta_m \phi - \left( \mu_h + \tau_2 \right) \Lambda \mu_m^{2} 
\mu_b \left( \eta_m + \mu_m \right)  \left( (1-\phi)\,\tau_1 + \mu_h \right).
\end{align*}
From \eqref{eq:R0}, $\mu_b \eta_A > \mu_m (\mu_A + \eta_A)$, $\phi < 1$, and $d^* < 0$. 
Thus, $S^*_+ >0$, $I^*_+ = \displaystyle \frac{\left( 1-R_0^2 \right)\Lambda^{2} \mu_{b} \mu_{m}^{2} 
\left((1-\phi)\tau_{1}+\mu_{h}\right)(\eta_{m}+\mu_{m})}{ d^*}>0$, and
$M^*_+ = \displaystyle \frac{\left( 1-R_0^2 \right) \Lambda^{2}  \mu_b {\mu_m^{2}} \,
\psi\tau_2 \, \left(  \left( 1-\phi \right) \tau_1
+ \mu_h \right)  \left( \eta_m + \mu_m \right)  }{ d^{*} \, \mu_{h}}
= \frac{\psi  \tau_{2}}{\mu_{h}} I_{+}^{*}>0$. 
Moreover, as for $W_{+}^{*}$, we have that it can be expressed as
$\displaystyle W^*_+ = \frac{\varpi}{d^* \, \mu_h \, (\mu_h + \tau_2)}$ with
\begin{multline*}
\varpi=B^2 \beta_{mh}^2 \eta_{m} K \Lambda \mu_{h}^2 
\phi(\psi -1) \tau_{2} (\eta_{A} \mu_{b}-\mu_{m}
(\eta_{A}+\mu_{A})) 
+ \Lambda^2 \mu_{b} \mu_{m} (\eta_{m}+\mu_{m}) (\mu_{h}+\tau_{2})\\
\times \left[ (\phi -1) (\tau_{1}-\tau_{2}) 
(B \beta_{mh} \mu_{h}+\mu_{m} (\mu_{h}+\psi  \tau_{2}))  
+ \tau_{2} (B \beta_{mh} \mu_{h} (\phi -1)
+\mu_{h} \mu_{m} (\phi -\psi )+\mu_{m} (\phi-1)\psi\tau_{2})\right].
\end{multline*}
Therefore, $W^*_+$ is positive assuming that $\tau_1 >\tau_2 $ and $\psi > \phi$.
Finally, by using again $\mu_b \eta_A > \mu_m (\mu_A + \eta_A)$ and $1-\phi > 0$, 
we obtain that $A^*_{m+} > 0$ and, moreover,
\begin{equation*}
S^*_{m+} = \frac{-(\mu_h + \tau_2)d^*}{B \beta_{mh} \mu_{h} d_{m}^{*}} > 0, 
\quad E^*_{m+} = \frac{\Lambda\, \left( R_0^2 -1 \right)  \mu_m^{2} \mu_b \, 
\left( \mu_h + \tau_2 \right)  \left(  \left( 1-\phi \right) \tau_1
+ \mu_h \right)   }{d_{m}^{*}} >0,
\quad I^*_{m+} =E^*_{m+} \frac{\eta_{m}}{\mu_{m}}> 0.
\end{equation*}
After some appropriate manipulations, 
the matrix associated to $E_{+}$ is given by 
\begin{equation}
\begin{pmatrix}
V_{11} & V_{12} & V_{12} & V_{12} & 0 & 0 & 0 & V_{18} \\
V_{21} & -\mu_{h}-\tau_{2}-V_{12} & -V_{12} & -V_{12} & 0 & 0 & 0 & -V_{18} \\
\tau_{1}(1-\phi  )& \tau_{2}(1-\psi  ) & -\mu_{h} & 0 & 0 & 0 & 0 & 0 \\
0 & \psi  \tau_{2} & 0 & -\mu_{h} & 0 & 0 & 0 & 0 \\
0 & 0 & 0 & 0 & V_{45} & V_{56} & V_{56} & V_{56} \\
V_{61} & V_{62} & V_{61} & V_{61} & \eta_{a} & -\mu_{m}-V_{76} & 0 & 0 \\
-V_{61} & -V_{62} & -V_{61} & -V_{61} & 0 & V_{76} & -\eta_{m}-\mu_{m} & 0 \\
0 & 0 & 0 & 0 & 0 & 0 & \eta_{m} & -\mu_{m} \\
\end{pmatrix}
\end{equation}
with
\begin{align*}
V_{11}&=-\mu_{h}+(\phi-1) \tau_{1}- \frac{\mu_{h} \zeta_{I} + \zeta_{S}
+\zeta_{W} }{\mu_{h} \zeta_{S})(\mu_{h}+\tau_{2})} V_{12}, \quad
V_{21}=\frac{B \beta_{mh} d^{*} \eta_{m} \mu_{h} \zeta_{I_{m}} 
\left(\mu_{h} \zeta_{I}+\zeta_{S}+\zeta_{W}\right) 
\phi (\mu_{h}+\tau_{2})}{d_{m}^{*} \mu_{m}(\eta_{m}+\mu_{m})(\zeta_{M}
+\zeta_{W} +\mu_{h}(\zeta_{I}+\zeta_{S} (\mu_{h}+\tau_{2})))^{2}}, \\
V_{45}&=-\eta_{a}-\mu_{a}-\frac{\mu_{b}(\mu_{m} \zeta_{E_{m}} 
+ \eta_{m} \zeta_{I_{m}}+\mu_{m}(\eta_{m}+\mu_{m} 
\zeta_{S_{m}})}{d_{m}^{*} K \mu_{m}(\eta_{m}+\mu_{m})}.
\end{align*}
With these notations, we have that the eigenvalues of the matrix are
\begin{equation*}
\lambda_{1}=\mu_{h}, \quad
\lambda_{2}=\frac{1}{2} \left(V_{45}-\mu_{m}
+\sqrt{4 \eta_{a} V_{56}+(\mu_{m}+V_{45})^2}\right),\quad
\lambda_{3}=\frac{1}{2} \left(V_{45}-\mu_{m}
-\sqrt{4 \eta_{a} V_{56}+(\mu_{m}+V_{45})^2}\right)
\end{equation*}
and the roots of a polynomial of degree five. 
Equivalently, we can write $\lambda_{2}$ 
and $\lambda_{3}$ as follows:
\begin{equation*}
\lambda_{2}=\frac{\sqrt{\eta_{a}^2 \mu_{b}^2+4 \mu_{m}^3 
(\eta_{a}+\mu_{a})-2 \eta_{a} \mu_{b} \mu_{m}^2+\mu_{m}^4}
-\eta_{a}\mu_{b}-\mu_{m}^2}{2 \mu_{m}}, \ 
\lambda_{3}=-\frac{\sqrt{\eta_{a}^2 \mu_{b}^2+4 \mu_{m}^3 
(\eta_{a}+\mu_{a})-2 \eta_{a} \mu_{b} \mu_{m}^2+\mu_{m}^4}
+\eta_{a}\mu_{b}+\mu_{m}^2}{2 \mu_{m}}.
\end{equation*}
Moreover, the polynomial of degree five has leading coefficient one, being given
by $x^{5}+\kappa_{4} x^{4}+\kappa_{3} x^{3}+\kappa_{2} x^{2} 
+ \kappa_{1} x + \kappa_{0}$ with
\begin{align*}
\kappa_{4}&= \eta _m+2 \mu _h+2 \mu _m+\tau _2-V_{11}+V_{12}+V_{76}, \\
\kappa_{3} & =  \eta _m \left(2 \mu _h+\mu _m+\tau _2-V_{11}+V_{12}
+V_{76}\right)+\mu _h \left(4 \mu _m+\tau _2-2 V_{11}+V_{12}+2 V_{76}\right)+\mu _h^2 \\
& +V_{76} \left(\mu _m+\tau_2-V_{11}+V_{12}\right)+2 \tau _2 \mu _m
-2 V_{11} \mu _m+2 V_{12} \mu _m+\mu _m^2+\phi  \tau _1 V_{12}-\tau _2 V_{11} \\
&-\tau _1 V_{12}+\tau _2 V_{12}-V_{12} V_{21}-V_{11} V_{12}, \\
\kappa_{2} &=  \eta _m \left(\mu _h \left(2 \mu _m+\tau _2-2 V_{11}
+V_{12}+2 V_{76}\right)+\mu _h^2+\mu _m \left(\tau _2-V_{11}+V_{12}\right)
+(\phi -1) \tau _1 V_{12}-\tau_2 V_{11}+\tau _2 V_{12} \right. \\
& \left. +V_{76} \left(\tau _2-V_{11}+V_{12}\right)-V_{11} V_{12}
-V_{12} V_{21}+V_{18} (V_{61}-V_{62})\right)+\mu _h 
\left(2 \mu _m \left(\tau2-2 V_{11}+V_{12}+V_{76}\right)+2 \mu _m^2  \right. \\
& \left. +(\phi -1) \tau _1 V_{12}-\tau _2 V_{11}
+V_{76} \left(\tau _2-2 V_{11}+V_{12}\right)-V_{11} V_{12}-V_{12} V_{21}\right)
+\mu_h^2 \left(2 \mu _m-V_{11}+V_{76}\right)+2 \phi  \tau _1 V_{12} \mu _m \\
& +V_{76} \left(\mu _m \left(\tau _2-V_{11}+V_{12}\right)+(\phi -1) \tau _1 V_{12}
+\tau _2\left(V_{12}-V_{11}\right)-\left(V_{11}+V_{21}\right) V_{12}\right)
-2 \tau _2 V_{11} \mu _m-2 \tau _1 V_{12} \mu _m \\
&+2 \tau _2 V_{12} \mu _m+\tau _2 \mu _m^2-V_{11}\mu _m^2
+V_{12} \mu _m^2-2 V_{11} V_{12} \mu _m-2 V_{12} V_{21} \mu_m
+\phi  \tau _1 \tau _2 V_{12}-\tau _1 \tau _2 V_{12} \\
&-\tau _2 V_{11} V_{12}-\tau _2 V_{12} V_{21}, \\
\kappa_{1}&= \eta _m \left(\mu _h \left(\mu _m \left(\tau_2
-2 V_{11}+V_{12}\right)+(\phi -1) \tau _1 V_{12}-\tau _2 V_{11}
+V_{76} \left(\tau _2-2 V_{11}+V_{12}\right)-V_{12} V_{11}
-V_{12} V_{21} \right. \right. \\
& \left. \left. +V_{18} \left(2 V_{61}-V_{62}\right)\right)
+\mu _h^2 \left(\mu _m-V_{11}+V_{76}\right)-\mu _m 
\left(V_{12} \left(-\phi  \tau _1+\tau_1+V_{11}+V_{21}\right)
+\tau _2 \left(V_{11}-V_{12}\right)\right) \right. \\
& \left. +\phi  \tau _1 \tau _2 V_{12}-\phi  \tau _1 V_{18} V_{61}
-V_{76} \left(V_{12} \left(-\phi  \tau_1+\tau_1+V_{11}+V_{21}\right)
+\tau _2 \left(V_{11}-V_{12}\right)\right)-\tau _1 \tau _2 V_{12}
-\tau _2 V_{11} V_{12} \right. \\
& \left.  -\tau _2 V_{12} V_{21}+\tau _1 V_{18} V_{61}+V_{11} V_{18} V_{62}
+V_{18} V_{21} V_{62}\right)+\mu _m \left(\mu _h \left(\mu _m \left(\tau_2
-2 V_{11}+V_{12}\right) \right. \right. \\
&\left. \left. -2 \left(V_{12} \left(-\phi  \tau _1+\tau_1+V_{11}+V_{21}\right)
+\tau _2 V_{11}\right)\right)+\mu _h^2 \left(\mu _m-2 V_{11}\right)
-\mu _m \left(V_{12} \left(-\phi  \tau _1
+\tau _1+V_{11}+V_{21}\right) \right. \right. \\
& \left. \left. +\tau_2 \left(V_{11}-V_{12}\right)\right)
+2 \tau_2 V_{12} \left((\phi -1) \tau_1-V_{11}-V_{21}\right)\right)
+V_{76} \left(\mu _h \left(\mu _m \left(\tau _2-2
V_{11}+V_{12}\right)+(\phi -1) \tau _1 V_{12} \right. \right. \\
& \left. \left. -V_{11} \left(\tau _2+V_{12}\right)-V_{21} V_{12}\right)
+\mu _h^2 \left(\mu _m-V_{11}\right)-\mu _m \left(V_{12}
\left(-\phi  \tau _1+\tau _1+V_{11}+V_{21}\right) \right. \right. \\
& \left. \left. +\tau _2 \left(V_{11}-V_{12}\right)\right)+\tau _2 V_{12} 
\left((\phi -1) \tau _1-V_{11}-V_{21}\right)\right), \\
\kappa_{0}&= \eta _m \left(-\mu _h \left(V_{12} \mu _m \left(-\phi  \tau _1
+\tau _1+V_{11}+V_{21}\right)+\tau _2 V_{11} \mu _m
+(\phi -1) \tau _1 V_{18} V_{61} \right. \right. \\
&\left. \left. +V_{12} V_{76}   \left(-\phi  \tau _1+\tau_1
+V_{11}+V_{21}\right)+\tau _2 V_{76} V_{11}-\tau _2 V_{18} V_{61}
-V_{18} \left(V_{11}+V_{21}\right) V_{62}\right)  \right. \\
&\left. +\mu _h^2 \left(-\left(V_{11} \left(\mu _m+V_{76}\right)
-V_{18} V_{61}\right)\right)+\tau _2 \left(V_{12} \left(\mu _m+V_{76}\right)
-V_{18} V_{61}\right) \left((\phi -1) \tau_1-V_{11}-V_{21}\right)\right) \\
&-\mu _m \left(\mu _h+\tau _2\right) \left(\mu _m+V_{76}\right) 
\left(V_{11} \mu _h+V_{12} \left((1-\phi) +V_{11}+V_{21}\right)\right).
\end{align*}
Obviously, these expressions become rather long. As a consequence, it is not possible 
to use the Routh--Hurwitz criterion in this general setting, but only 
for particular values of the parameters. Moreover, the eigenvalues 
can be computed numerically for the specific values given in 
Table~\ref{parameters} of the next section (realistic data from Brazil), 
and they are given by $\lambda_{1}=-0.02$, $\lambda_{2}=-0.02$, 
$\lambda_{3}=-5000$, $\lambda_{4}=-22.3938$, 
$\lambda_{5}=-0.0511697$, $\lambda_{6}=-0.044689$, 
$\lambda_{7}=-0.00919002$, $\lambda_{8}=-0.0079988$.
We can observe that local stability of the endemic equilibrium holds, 
since all eigenvalues are negative real numbers.


\section{Numerical simulations: case study of Brazil}
\label{sec:6}

We perform numerical simulations to compare the results 
of our model with real data obtained from several reports 
published by the World Health Organization (WHO) \cite{who:data:zika}, 
from the starting point when the first cases of Zika 
have been detected in Brazil and for a period of 40 weeks 
(from February 4, 2016 to November 10, 2016), 
which is assumed to be a regular pregnancy time.


\subsection{Zika model fits well real data}

According to several sources, the total population of Brazil is 206,956,000, 
and every year there are about 3,073,000 new born babies. As a consequence, 
there are about 3,000,000/52 new pregnant females every week. The number 
of babies with neurological disorders is taken from WHO reports 
\cite{who:data:zika}. See Table~\ref{parameters}, where the values 
considered in this manuscript have been collected, such that the numerical 
experiments give good approximation of real data obtained 
from the WHO \cite{who:data:zika}.
\begin{table}[!htb]
\floatbox[\capbeside]{table}
{\caption{Parameter values for system \eqref{zikamodel1}--\eqref{zikamodel2}.}\label{parameters}} 
\centering
\begin{tabular}{|l | l | l |  } \hline
{\scriptsize{Symbol}} & {\scriptsize{Description}}  & {\scriptsize{Value}}  \\ \hline
{\scriptsize{$\Lambda$}} & {\scriptsize{new pregnant women (per week)}}  & {\scriptsize{$3000000/52$ }}  \\
{\scriptsize{$\phi$}} & {\scriptsize{fraction of $S$ that gets infected}}  & {\scriptsize{$0.459$}} \\
{\scriptsize{$B$}} & {\scriptsize{average daily biting (per day)}}  & {\scriptsize{$1$}} \\
{\scriptsize{$\beta_{mh}$}} & {\scriptsize{transmission probability from $I_m$ (per bite)}}  & {\scriptsize{$0.6$}} \\
{\scriptsize{$\tau_{1}$}} & {\scriptsize{rate at which $S$ give birth (in weeks)}}  & {\scriptsize{$37$}} \\
{\scriptsize{$\tau_{2}$}} & {\scriptsize{rate at which $I$ give birth (in weeks)}}  & {\scriptsize{$1/25$}}  \\
{\scriptsize{$\mu_{h}$}} & {\scriptsize{natural death rate}}  & {\scriptsize{$1/50$}} \\
{\scriptsize{$\psi$}} & {\scriptsize{fraction of $I$ that gives birth to babies with neurological disorder}}  
& {\scriptsize{$0.133$}}  \\
{\scriptsize{$\beta_{hm}$}} & {\scriptsize{transmission probability from $I_h$ (per bite)}}  & {\scriptsize{$0.6$}} \\
{\scriptsize{$\mu_b$}} & {\scriptsize{number of eggs at each deposit per capita (per day)}}  & {\scriptsize{$80$}} \\
{\scriptsize{$\mu_A$}} & {\scriptsize{natural mortality rate of larvae (per day)}}  & {\scriptsize{$1/4$}} \\
{\scriptsize{$\eta_A$}} & {\scriptsize{maturation rate from larvae to adult (per day)}}  & {\scriptsize{$0.5$}}  \\
{\scriptsize{$1/\eta_m$}} & {\scriptsize{extrinsic incubation period (in days)}}  & {\scriptsize{$125$}} \\
{\scriptsize{$1/\mu_m$}} & {\scriptsize{average lifespan of adult mosquitoes (in days)}}  & {\scriptsize{$125$}} \\
{\scriptsize{$K$}} & {\scriptsize{maximal capacity of larvae}}  & {\scriptsize{1.09034e+06}}  \\ \hline
\end{tabular}
\end{table}

Figure~\ref{f:comparison} shows how our model fits the real data in the period 
from February 4, 2016 to November 10, 2016. More precisely, the $\ell_{2}$ norm 
of the difference between the real data and the curve produced by our model, 
in the full period, is 392.5591, which gives an average of about 9.57 cases 
of difference each week. We have considered as initial values $S_0 = 2,180,686$ 
($S_0$ is the number of newborns corresponding to the simulation period) 
and the number of births in the period, $I_{0}=1$, $M_0=0$, 
and $W_0=0$ for the women populations, 
and $A_{m0}=S_{m0}=I_{m0}=\text{1.0903e+06}$, 
and $E_{m0}=\text{6.5421e+06}$ for the mosquitoes populations. 
The system of differential equations has been solved by using the  
\texttt{ode45} function of \textsf{Matlab}, in a \textsf{MacBook Pro} 
computer with a 2,8 GHz Intel Core i7 processor 
and 16 GB of memory 1600 MHz DDR3.
\begin{figure}[!ht]
\floatbox[{\capbeside\thisfloatsetup{capbesideposition={right,center},capbesidewidth=4cm}}]{figure}[\FBwidth]
{\caption{Number of newborns with microcephaly. 
The red line corresponds to the real data obtained 
from the WHO \cite{who:data:zika} from 04/02/2016 to 10/11/2016 
and the blue line has been obtained by solving numerically 
the system of ordinary differential equations \eqref{zikamodel1}--\eqref{zikamodel2}. 
The $\ell_{2}$ norm of the difference between the real data and our prediction 
is $992.5591$, which gives an error of less than $9.57$ cases per week.}\label{f:comparison}}
{\includegraphics[scale=0.38]{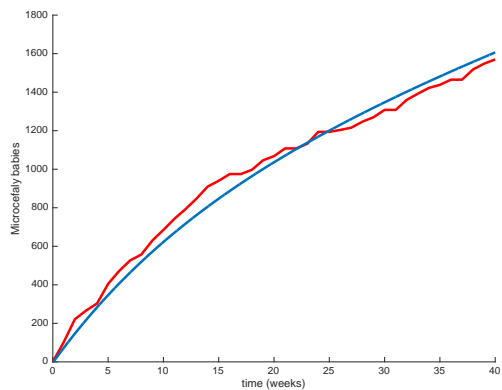}}
\end{figure}


\subsection{Local stability of the endemic equilibrium}

In order to illustrate the local stability of the endemic equilibrium, 
we show that for different initial conditions the solution of the differential 
system \eqref{zikamodel1}--\eqref{zikamodel2} tends to the endemic equilibrium point. 
We have done a number of numerical experiments. Precisely, we consider here four 
different initial conditions for the women population, $C_{1}$, $C_{2}$, $C_{3}$, $C_{4}$, 
defined in Table~\ref{table:eee}, giving the same value for $S_{0}+I_{0}+M_{0}+W_{0}$. 
The system of differential equations \eqref{zikamodel1}--\eqref{zikamodel2} 
has been solved for a period of 156 weeks, which correspond approximately to 3 years, 
by using the same numerical method and machine as for the comparison between 
our model and real data provided by the World Health Organization 
in Figure~\ref{f:comparison}. These numerical experiments 
are included in Figure~\ref{f:equilibrium}.
\begin{table}[htb!]
\floatbox[\capbeside]{table}
{\caption{Initial conditions used in the numerical simulations
for the case study of Brazil (see Figure~\ref{f:equilibrium}).}\label{table:eee}}
{\centering
\begin{tabular}{|c|c|c|c|c|} \hline
& {\scriptsize{$S_{0}$}} & {\scriptsize{$I_{0}$}} 
& {\scriptsize{$M_{0}$}} & {\scriptsize{$W_{0}$}} \\ \hline
{\scriptsize{$C_{1}$}} & {\scriptsize{$2.18069\times 10^6$}} 
& {\scriptsize{1}} & {\scriptsize{0}} & {\scriptsize{0}} \\ \hline
{\scriptsize{$C_{2}$}} & {\scriptsize{$2.17633\times 10^6$}} 
& {\scriptsize{1454.79}} & {\scriptsize{1453.79}} & {\scriptsize{1453.79}} \\ \hline
{\scriptsize{$C_{3}$}} & {\scriptsize{$2.17851\times 10^6$}} 
& {\scriptsize{727.896}} & {\scriptsize{726.896}} & {\scriptsize{726.896}} \\ \hline
{\scriptsize{$C_{4}$}} & {\scriptsize{$2.18025\times 10^6$}} 
& {\scriptsize{146.379}} & {\scriptsize{145.379}} & {\scriptsize{146.379}} \\ \hline
\end{tabular}}
\end{table}
\begin{figure}[h!]
\floatbox[{\capbeside\thisfloatsetup{capbesideposition={right,center},capbesidewidth=4cm}}]{figure}[\FBwidth]
{\caption{The solution of differential system \eqref{zikamodel1}--\eqref{zikamodel2} 
tends to the endemic equilibrium point, independently of initial conditions.
In this figure, we show the evolution of the populations of infected women ($I$) 
and cases of microcephaly ($M$). Initial conditions 
are those of Table~\ref{table:eee}.}\label{f:equilibrium}}
{\centering
\includegraphics[width=0.37\textwidth]{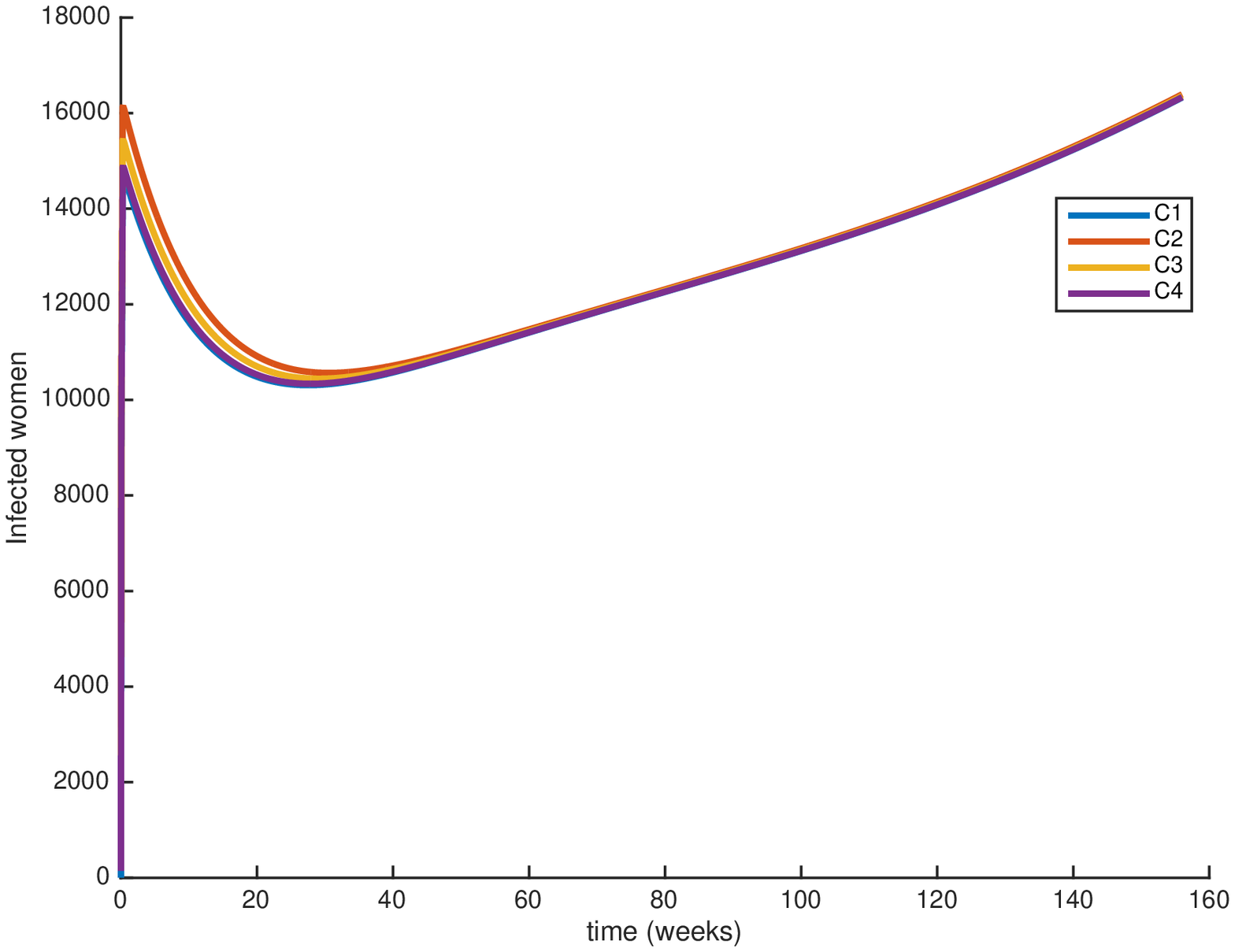}
\includegraphics[width=0.37\textwidth]{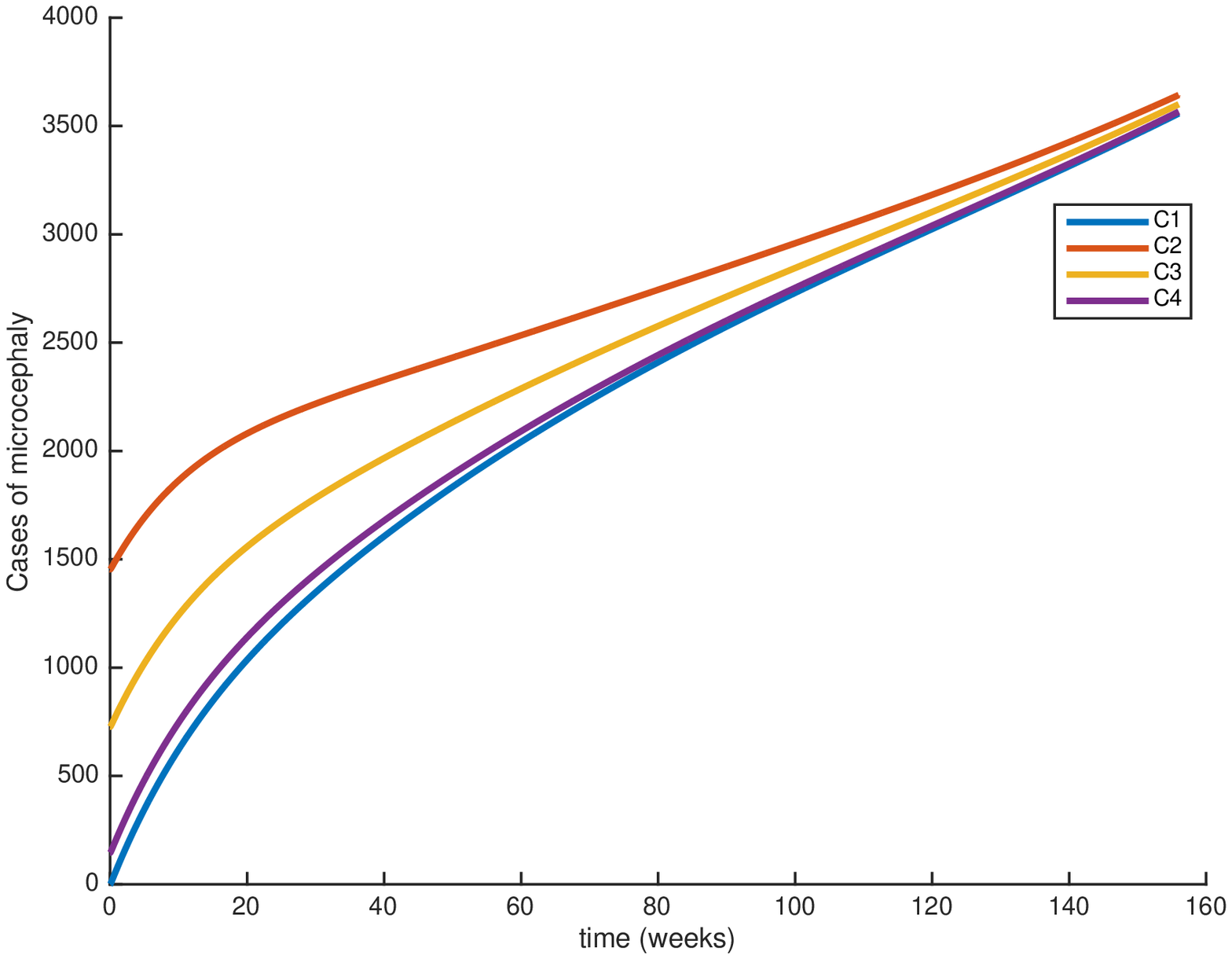}}
\end{figure}


\subsection{Sensitivity analysis of the basic reproduction number}

In order to determine which are the most and less sensitive parameters
of our model, we compute the values of expressions 
\eqref{derivativesR02} according to the numerical values 
given in Table~\ref{parameters}. The numerical results are as follows:
\begin{align*}
\frac{\partial R_{0}^{2}}{\partial \phi}&=4.02523 R_{0}^{2}, \quad
\frac{\partial R_{0}^{2}}{\partial B}=2 R_{0}^{2}, \quad
\frac{\partial R_{0}^{2}}{\partial \beta_{mh}}=3.33333 R_{0}^{2}, \quad
\frac{\partial R_{0}^{2}}{\partial \eta_{m}}=62.5 R_{0}^{2}, \quad
\frac{\partial R_{0}^{2}}{\partial \mu_{b}}=1.8753\,\times10^{-6} R_{0}^{2}, \\
\frac{\partial R_{0}^{2}}{\partial \eta_{a}} &=2.0001 R_{0}^{2}, \quad
\frac{\partial R_{0}^{2}}{\partial \mu_{a}}=-0.00020003 R_{0}^{2},\quad
\frac{\partial R_{0}^{2}}{\partial \mu_{m}} =-312.51 R_{0}^{2}, \quad
\frac{\partial R_{0}^{2}}{\partial \Lambda} =-0.0000173333 R_{0}^{2}, \quad
\frac{\partial R_{0}^{2}}{\partial \tau_{2}} =-16.6667 R_{0}^{2}.
\end{align*}
In Table~\ref{table:sensitivity:index}, we present the sensitivity index of parameters 
$\Lambda$, $\phi$, $B$, $\beta_{mh}$, $\tau_1$, $\tau_2$, $\mu_b$, $\mu_A$, $\eta_A$, 
$\eta_m$, $\mu_m$ and $K$ computed  for the parameter values given in Table~\ref{parameters}. 
\begin{table}[!htb]
\floatbox[\capbeside]{table}
{\caption{The normalized forward sensitivity index 
of the basic reproduction number $R_0$, for the parameter values 
of Table~\ref{parameters}.}\label{table:sensitivity:index}}
{\centering
\begin{tabular}{|l l || l l|} \hline 
{\small{Parameters}} & {\small{Sensitivity index}} 
& {\small{Parameters}} & {\small{Sensitivity index}}\\ \hline
{\scriptsize{$\Lambda$}}, {\scriptsize{$K$}} & {\scriptsize{$-1/2$}} 
& {\scriptsize{$\mu_b$}} & {\scriptsize{$0.00007501125170$ }}   \\
{\scriptsize{$\phi$}} & {\scriptsize{$0.9237909865$ }} & {\scriptsize{$\mu_A$}} 
& {\scriptsize{-0.2500375056e-4 }} \\
{\scriptsize{$B$}}, {\scriptsize{$\beta_{mh}$}} & {\scriptsize{$1$ }} 
& {\scriptsize{$\eta_A$}} & {\scriptsize{$0.5000250040$ }}   \\
{\scriptsize{$\tau_{1}$}} & {\scriptsize{$-0.4995009233$ }} 
& {\scriptsize{$\eta_m$}} & {\scriptsize{$1/4$ }}  \\
{\scriptsize{$\tau_{2}$}} & {\scriptsize{$-1/3$ }} 
& {\scriptsize{$\mu_m$}} & {\scriptsize{$-1.250075011$}} \\ \hline
\end{tabular}}
\end{table}
From Table~\ref{table:sensitivity:index}, we conclude that the most 
sensitive parameters are $B$ and $\beta_{mh}$, which means 
that in order to decrease the basic reproduction number 
in $x\%$ we need to decrease these parameter values in $x\%$. 
Therefore, in order to reduce the transmission of the Zika virus 
it is crucial to implement control measures that lead to a reduction 
on the number of daily biting (per day) $B$ and the transmission 
probability from the infected mosquitoes, $\beta_{mh}$. 
The fraction $\phi$ of susceptible pregnant women $S$ that get 
infected has a sensitive index very close to $+1$. 
This fact reinforces the importance of prevention measures, 
which protect susceptible pregnant women of becoming infected. 
 

\section{Conclusion}
\label{sec:7}

We proposed a new mathematical model for the transmission of Zika disease 
taking into account newborns with microcephaly. It has been shown that the
proposed model fits well the recent reality of Brazil from February 4, 2016 
to November 10, 2016 \cite{who:data:zika}. From a sensitivity analysis, 
we conclude that in order to reduce the number of new infection by Zika virus, 
it is important to implement control measures that reduce the average number 
of daily biting and the transmission probability from infected mosquitoes 
to susceptible pregnant women. 


\bigskip

\noindent \emph{Acknowledgements}. Area and Nieto were supported by  
\emph{Agencia Estatal de Innovaci\'on} of Spain, project MTM2016-75140-P, 
co-financed by FEDER and \emph{Xunta de Galicia}, grants GRC 2015--004 and R 2016/022. 
Silva and Torres were supported by FCT and CIDMA within project UID/MAT/04106/2013; 
by PTDC/EEI-AUT/2933/2014 (TOCCATA), co-funded by 3599-PPCDT 
and FEDER funds through COMPETE 2020, POCI. Silva is also grateful to the 
FCT post-doc fellowship SFRH/BPD/72061/2010.


\vspace*{-3mm}




\begin{thebibliography}{99}

\bibitem{Moreno:2017}
V. M. Moreno, B. Espinoza, D. Bichara, S. A. Holechek, C. Castillo-Chavez.
Role of short-term dispersal on the dynamics of Zika virus in an extreme idealized environment,
Infect. Dis. Model. 2 (2017), 21--34.

\bibitem{Driessche:inpress}
P. van den Driessche.
Reproduction numbers of infectious disease models,
Infect. Dis. Model., in press.
DOI: 10.1016/j.idm.2017.06.002

\bibitem{nature:2017}
M. Worobey.
Epidemiology: Molecular mapping of Zika spread,
Nature 546 (2017), 355--357.

\bibitem{Agusto:2017}
F. B. Agusto, S. Bewick, W. F. Fagan.
Mathematical model of Zika virus with vertical transmission,
Infect. Dis. Model. 2 (2017), 244--267.

\bibitem{zika-microcephaly}
H. Tang, C. Hammack, S. C. Ogden, Z. Wen, Z. Qian, Y. Li, B. Yao, 
J. Shin, F. Zhang, E. M. Lee, K. M. Christian, R. A. Didier, P.~Jin, H. Song, G.-li Ming.
Zika virus infects human cortical neural progenitors and attenuates their growth,
Cell Stem Cell. 18 (2016), no.~5, 587--590.

\bibitem{WHO}
WHO Zika Causality Working Group,
Zika Virus Infection as a Cause of Congenital Brain Abnormalities 
and Guillain–Barr\'e Syndrome: Systematic Review,
PLOS Medicine, 2017, 27~pp.
DOI:10.1371/journal.pmed.1002203

\bibitem{MR3557143}
H. S. Rodrigues, M. T. T. Monteiro, D. F. M. Torres. 
Seasonality effects on dengue: basic reproduction number, sensitivity analysis and optimal control, 
Math. Methods Appl. Sci. 39 (2016), no.~16, 4671--4679. 
{\tt arXiv:1409.3928}

\bibitem{RodriguesMonteiroTorresZinober:Dengue}	
H. S. Rodrigues, M. T. T. Monteiro, D. F. M. Torres, A. Zinober. 
Dengue disease, basic reproduction number and control, 
Int. J. Comput. Math. 89 (2012), no.~3, 334--346.  
{\tt arXiv:1103.1923}

\bibitem{Abate}	
A. Abate, A. Tiwari, S. Sastry. 
Box invariance in biologically-inspired dynamical systems, 
Automatica 45 (2009), no.~7, 1601--1610. 

\bibitem{van:den:Driessche:2002}
P.~van den Driessche, J.~Watmough.
Reproduction numbers and subthreshold endemic equilibria
for compartmental models of disease transmission,
Math. Biosc. 180 (2002), 29--48.

\bibitem{Chitnis}
N. Chitnis, J. M. Hyman, J. M. Cushing.
Determining important parameters in the spread of malaria 
through the sensitivity analysis of a mathematical model,
Bull. Math. Biol. 70 (2008), no.~5, 1272--1296.

\bibitem{Kong}
Q. Kong, Z. Qiu, Z. Sang, Y. Zou.
Optimal control of a vector-host epidemics model,
Math. Control Relat. Fields 1 (2011), 493--508.

\bibitem{who:data:zika}
\url{http://www.who.int/emergencies/zika-virus/situation-report/en/},
accessed July 13, 2017.

\end{thebibliography}
\end{document}